\newtheorem{theorem}{Theorem}[section]
\newtheorem{lemma}[theorem]{Lemma}
\newtheorem{claim}[theorem]{Claim}
\newcommand{\junk}[1]{}
\newcommand{\ignore}[1]{}
\newcommand{\R}[0]{{\ensuremath{\mathbb{R}}}}
\newcommand{\Z}[0]{{\ensuremath{\mathbb{Z}}}}
\newcommand{\eps}{\varepsilon}
\newcommand{\disc}{\mathrm{disc}}
\newcommand{\E}{\mathbb{E}}
\newcommand{\cs}{\mathcal{S}}
\newcounter{note}[section]
\newcommand{\qedsymb}{\hfill{\rule{2mm}{2mm}}}
\newcommand{\initOneLiners}{%
    \setlength{\itemsep}{0pt}
    \setlength{\parsep }{0pt}
    \setlength{\topsep }{0pt}
}
\newcommand{\squishlist}{
 \begin{list}{$\bullet$}
  { \setlength{\itemsep}{0pt}
     \setlength{\parsep}{3pt}
     \setlength{\topsep}{3pt}
     \setlength{\partopsep}{0pt}
     \setlength{\leftmargin}{1.5em}
     \setlength{\labelwidth}{1em}
     \setlength{\labelsep}{0.5em} } }
\newcommand{\squishend}{
  \end{list}  }
\newcommand{\ah}{\alpha}
\newcommand{\ben}{\begin{enumerate}}
\newcommand{\een}{\end{enumerate}}
\newcommand{\beq}{\begin{equation}}
\newcommand{\eeq}{\end{equation}}
\newcommand{\bc}{\begin{center}}
\newcommand{\ec}{\end{center}}
\newcommand{\del}{\delta}
\title{On-Line Balancing of Random Inputs}
\author{Nikhil Bansal\thanks{CWI and TU Eindhoven, Netherlands.
\texttt{bansal@gmail.com}. Supported by a NWO Vici grant 639.023.812 and an ERC consolidator grant 617951.} \and Joel H. Spencer\thanks{Courant Institute, New York University. \texttt{spencer@cims.nyu.edu}.}}
\date{}
\begin{document}
\maketitle
\begin{abstract}
We consider an online vector balancing game where vectors $v_t$, chosen uniformly at random in $\{-1,+1\}^n$, arrive over time and a sign $x_t \in \{-1,+1\}$ must be picked immediately upon the arrival of $v_t$. 
The goal is to minimize the $L^\infty$ norm of the signed sum $\sum_t x_t v_t$.
We give an online strategy for picking the signs $x_t$ that has value $O(n^{1/2})$ with high probability. Up to constants, this is the best possible even when the vectors are given in advance.
\end{abstract}

\section{Introduction}

A random set of vectors $v_1,\ldots,v_n\in \R^n$ is sent to our hero, Carole.  The vectors are each
uniform among the $2^n$ vectors with coordinates $-1,+1$, and they are mutually independent.
Carole's mission is to balance the vectors into two nearly equal groups. To that end she
assigns to each vector $v_t$ a sign $x_t\in\{-1,+1\}$. Critically, the signs have to be
determined {\em on-line} -- Carole has seen only vectors $v_1,\ldots,v_t$ when she determines
sign $x_t$.  Set
\beq
  P = x_1v_1 + \ldots +x_nv_n  \eeq
Carole's goal is to keep {\em all} of the coordinates of $P$ small in absolute value.  
We
set $V = |P|_{\infty}$, the $L^{\infty}$ norm of $P$.  
We consider $V$ the value of this (solitaire) game, which Carole tries to minimize.

As our main result, we give a simple algorithm for Carole (with somewhat less
simple analysis!) such that $V\leq K\sqrt{n}$ with high probability. Here $K$
is an absolute constant which we do not attempt to optimize.

To give a feeling, imagine Carole simply selected $x_j\in\{-1,1\}$ uniformly
and independently, not looking at $v_j$.  Then each coordinate of $P$ would
have distribution $S_n$, roughly $\sqrt{n}N$, with $N$ a standard normal.  For, say,
$K=10$, the great preponderance of the coordinates would lie in $[-K\sqrt{n},+K\sqrt{n}]$.
However, there would be a small but positive proportion of {\em outliers}, coordinates
not lying in that interval.  Indeed, the largest coordinate, with high probability,
would be $\Theta(\sqrt{n\log n})$.
Carole's task, from this vantagepoint, is to avoid outliers.

More generally, we define $V=V(n,T)$ where the vectors are in $\R^n$ and there are
$T$ rounds.  Let $T$ be arbitrary.  In particular, think of $T$ as very large. Again, if Carole simply selected $x_j \in \{-1,1\}$ uniformly and independently, then each coordinate would be distributed as roughly $\sqrt{T}$ times the standard normal. So the largest coordinate, with high probability, would be $\Theta(\sqrt{T \log n})$.
We  extend our algorithm above to give an algorithm for the arbitrary time horizon, which guarantees that for any time $t \leq T$,
$V(n,t) \leq K\sqrt{n}$ with probability exponentially close to $1$. This is considered in Section \ref{s:inf-time}.

\subsection{Four Discrepancies}
Paul, our villian, sends $v_1,\ldots,v_n\in \{-1,+1\}^n$ to Carole.  Carole balances with
signs $x_1,\ldots,x_n\in \{-1,+1\}$.  The value of this now two-player game is
$V=|P|_{\infty}$ with $P=\sum x_iv_i$ as above.  There are four variants.  Paul can
be an adversary (trying to make $V$ large) or can play randomly (as above).  Carole
can play on-line (as above) or off-line -- waiting to see all $v_1,\ldots,v_n$ before
deciding on the signs $x_1,\ldots,x_n$.  All of the variants are interesting.

Paul adversarial, Carole offline.  Here $V=\Theta(\sqrt{n})$.  This was first shown by
the senior author \cite{Spencer85} and the first algorithmic strategy (for Carole)
was given by the junior author \cite{B10}.  

Paul random, Carole offline.  Here $V=\Theta(\sqrt{n})$.  In recent work \cite{APZ19},
a value $c$ such that $V\sim c\sqrt{n}$ (with high probability) was conjectured with
strong partial results.

Paul adversarial, Carole online.  Here $V=\Theta(\sqrt{n\log n})$.  These results may
be found in the senior author's monograph \cite{Spe94}.  Up to constants,
Carole can do no better than playing randomly.  It was this result that made our
current result a bit surprising.

Paul random, Carole online.  $V=\Theta(\sqrt{n})$, the object of our current work.

The $T$ round setting is also very interesting. If Paul picks vectors $v_t \in \{-1,+1\}^n $ adversarially, and Carole plays online, then no better bound is possible than  exponential in $n$ \cite{Barany79}. Basically, all Carole can do is alternate signs when one of the $2^n$ possible vectors $v$ is repeated.

\subsection{Alternate Formulations} 
\label{s:alt-formulations} We return to our focus, the random online case.
We find it useful to consider the problem in a variety of guises.

Consider an $n$-round (solitaire) game with a position vector $P\in \R^n$.  Initially
$P\leftarrow 0$.  On each round a random $v\in \{-1,+1\}^n$ is given.  Carole must
then reset either $P\leftarrow P+v$ or $P\leftarrow P-v$.  The value of the game
is $|P|_{\infty}$ with the position vector $P$ after the $n$ rounds have been completed.

{\bf Chip game.} Consider $n$ chips on $\Z$, all initially at $0$.  Each round each chip selects a random
direction.  Carole then either moves all of the chips in their selected direction or
moves all of the chips in the opposite of their selected direction.  After $n$ rounds
the value $V$ is the longest distance from the origin to a chip.  (Here chip $j$ at
position $s$ represents that the $j$-th coordinate of $P$ is $s$.)

{\bf Folded chip game.} Consider $n$ chips on the non-negative integers, initially all at $0$.  The rules are
as above except that a chip at position $0$ can only go to $1$ in the next step. Here
the chip position is the absolute value of its position in the previous formulation. Even though the folded chip game is not exactly the same as the chip game above, the distributions produced on the absolute value of the positions in the two games are identical, which is all that we will need.

\subsection{Erd\H{o}s}  Historically, discrepancy was examined for families of sets.
Let $(V,\cs)$ be a set system with $V = [n]$ and $\cs=\{S_1,\ldots,S_n\}$ a collection of subsets of $V$. For a two-coloring $\chi: V \rightarrow \{-1,+1\}$, the discrepancy of  a set $S$ is defined as $\chi(S) = |\sum_{i\in S} \chi(i)|$, and measures the imbalance from an even split of $S$.
The discrepancy of the system $(V,\cs)$ is defined as
\beq \disc(\cs) = \min_{\chi: V \rightarrow \{-1,+1\}} \max_{ S \in \cs} \chi(S) \eeq
That is, it is the minimum imbalance  of all sets in $\cs$ over all possible two-colorings $\chi$.
Erd\H{o}s famously asked for the maximal possible $\disc(\cs)$ over all such set systems.   It was in
this formulation that the senior author first showed that $\disc(\cs)\leq K\sqrt{n}$.  

Consider the $n\times n$ incidence matrix $A$ for the set system $(V,\cs)$.  That is, set $a_{ij}=1$
if $j\in S_i$, otherwise $a_{ij}=0$.  Let $v_1,\ldots,v_n$ denote the column vectors of $A$.
The coloring $\chi$ corresponds to the choice of $x_j=\chi(j)$.  Then 
 $|\sum_j x_j v_j|_\infty$ measures the maximal imbalance of the coloring.  The set-system
problem is then essentially the Adversarial, Off-Line Paul/Carole game.  The distinction is
only that the coordinates of the $v_i$ are $0,1$ instead of $-1,+1$.

\section{Carole's Algorithm}
\label{s:main}
The time will be indexed $t=0,1,\ldots,n$. Initially $P=0\in \R^n$.  In round
$t$, a random $v_t$ arrives and Carole resets $P\leftarrow P\pm v_t$.  Let $P_t$
denote the vector $P$ after the $t$-th round.  
Let $d_j(t)$ denote the $j$-th coordinate of $P_t$.

The algorithm will be based on a potential function and depend on variables $c,p$.  
We shall want $V\leq \sqrt{cn}$ with high probability, and the potential will penalize coordinates with discrepancy close to $\sqrt{cn}$. Here  $c$ will be a large constant as specified later, and $p$ will be a positive
integer central to the algorithm.  We may take $p=4$ and $c=10^{5}$ to be specific.  However,
we use the variables $c$ and $p$ in the analysis until the end to understand the various dependencies among the parameters.

Define the gap for coordinate $j$ as
\beq 
\label{eq:gj}
g_j(t) := c n - d_j(t)^2 \eeq 
The algorithm will, with high probability, keep all $|d_j(t)|< \sqrt{cn}$ so that the gaps are positive.
Let 
\beq \label{eq:phij} \Phi_j(t) = c^pn^{p-1} g_j(t)^{-p}\eeq 
and define the potential function 
\beq \label{Phi1}
\Phi(t) = \sum_j \Phi_j(t) = c^p n^{p-1} \sum_{j=1}^n g_j(t)^{-p} \eeq
As $d_j(0) =0$ for all $j\in [n]$, $\Phi(0) = n c^p n^{p-1} (cn)^{-p} = 1$.
Note that the potential blows up whenever the discrepancy $|d_j(t)|$ for any coordinate $j$ approaches $(cn)^{1/2}$.
The $c^pn^{p-1}$ factor provides a convenient normalization.  When all $d_j(t)=(1-\kappa)\sqrt{cn}$,
$\Phi = (2\kappa-\kappa^2)^{-p}$.

The algorithm is simple. On the $t$-th round, seeing $v_t$, Carole selects 
the sign $x_t \in \{-1,+1\}$, that minimizes the increase in the potential $\Phi(t) - \Phi(t-1)$.

We remark that while potential function analyses are widely used in the design and analysis of random processes and algorithms, 
the inverse polynomial potential function considered above is motivated by the work of Batson, Spielman and Srivastava on graph sparsification \cite{BSS12}.
In the context of discrepancy, a similar potential was used by the authors \cite{BS13}, and in an unpublished work of Yin Tat Lee and Mohit Singh to design offline algorithms.

\subsection{Rough Analysis}
Lets imagine all the $d_j(t)$ as positive and near the boundary $\sqrt{cn}$.  The gap
basically acts like
\beq   g_j^*(t) = 2\sqrt{cn}[\sqrt{cn}-d_j(t)]  \eeq
Let $\Phi_j^*(t), \Phi^*(t)$ be the potential values using this cleaner gap function.
Suppose {\em all} $d_j(t)= \sqrt{cn}(1-\kappa)$.  Then $g_j^*(t)=2\kappa cn$ and $\Phi^* = (2\kappa)^{-p}$.
Set $f(x)=x^{-p}$ and consider the change ($x$ large) when $x$ is incremented or decremented
by one.  From Taylor Series we approximate
\beq\label{a}  \frac{f(x\pm 1)-f(x)}{f(x)} \sim \mp px^{-1} + \frac{p(p+1)}{2}x^{-2}  \eeq
ignoring the higher order terms.  Consider the change in $\Phi^*$ when a random vector $v_{j+1}$
is added.  We break it into a linear part $L$ and a quadratic part $Q$.  We compare their
sizes using (\ref{a}).
The quadratic part
is always positive, $(p(p+1)/2)(2\kappa\sqrt{cn})^{-2} (\Phi^*/n)$  for each term $j$,
adding up to $Q=(p(p+1)\kappa^{-2}/8) (2\kappa)^{-p}/cn$.  The linear part is
$\mp p(2\kappa)^{-1}c^{-1/2}n^{-1/2} (\Phi^*/n) =   \mp p(2\kappa)^{-1}c^{-1/2}n^{-3/2} (2\kappa)^{-p}  $ for each term $j$.  As the vector (critically!) is random
the signs $\mp$ are random and so add to distribution roughly $\sqrt{n}N$, $N$ standard
normal. Thus $L\sim p(2\kappa)^{-1}c^{-1/2}N (2\kappa)^{-p}/n$.  Carole's sign selection, effectively, replaces
$L$ with $-|L|$.  The change in $\Phi$ is then proportional to $-|L|+Q$.  
With probability at least $1/4$, say,
$|N| \geq 1$.  After fixing $p$ and $\kappa$, $|L|$ will be of the order of $c^{-1/2}/n$ while $Q$ will
be on the order of $c^{-1}/n$.  For $c$ large enough, the linear term
$-|L|$ will be much bigger than the positive quadratic term $Q$.

Now lets keep the total potential $\Phi^*=(2\kappa)^{-p}$ fixed but suppose that some of the gaps 
$g_j(t)$ were smaller
and the other gaps had zero effect on the total potential.  Say, giving a
good parametrization, that $d_j(t)=\sqrt{cn}(1-2^{-u}\kappa)$ for $m=n2^{-pu}$ values of $j$
(As the potential takes $\sqrt{cn}-d_j(t)$ to power $-p$, the total potential will remain
the same.)  Again we break the change in $\Phi^*$ into $L$ and $Q$. We think of $p,\kappa,c$ as
fixed and consider the effect of $u$.
The quadratic terms are now 
$(p(p+1)/2)(2^{-u}\kappa\sqrt{cn})^{-2} (\Phi^*/n)$ for each term, an extra factor of $2^{2u}$.  But the
number of terms is $n2^{-pu}$ so the new value is
$Q=2^{(2-p)u} (p(p+1)\kappa^{-2}/2)(2\kappa)^{-p}/cn$.  
The linear terms are now
$\mp p(2^{-u}\kappa)^{-1}c^{-1/2}n^{-1/2}(\Phi^*/n)$ for each term,  an extra factor of $2^u$.  Now, however,
we sum $m=n2^{-pu}$ random signs, giving $\sqrt{m}N = 2^{-pu/2}\sqrt{n}N$. Compared to the base
$u=0$ case the quadratic term $Q$ has been multiplied by $2^{(2-p)u}$ while the linear term $L$
has been multiplied by $2^{(2-p)u/2}$.  We've taken $p=4$ so these factors are $2^{-2u}$ and
$2^{-u}$ respectively.  As $u$ gets bigger the domination of $L$ over $Q$ becomes stronger.
This gives us ``extra room" and works even if only a proportion of the potential function
came from these $d_j$.  

In the actual analysis the total potential $\Phi$ is in a prescribed moderate range.  
However, we cannot assume that all
of the potential comes from some $n\theta$ coordinates with the same gaps.  We split the
coordinates into classes, those in the same class having roughly the same $d$ value.  We
find some class that has so much of the total potential $\Phi$ that $L$ will dominate
over $Q$.  Making all this precise is the object of Lemma \ref{l:main} below.

\subsection{Analysis}
We will show the following result.
\begin{theorem}
\label{thm1}
The strategy above achieves value $V = O(n^{1/2})$, with probability at least $1-\exp(-\Omega(n^\gamma))$, where $\gamma = 1-2/p$. 
\end{theorem}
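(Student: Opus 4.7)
The plan is to show that the potential $\Phi(t)$ introduced above stays
$\le 2$ (say) throughout the $n$ rounds, except with probability
$\exp(-\Omega(n^\gamma))$. This suffices because $\Phi(t)\le 2$ forces each
summand $\Phi_j(t)\le 2$, hence $g_j(t)>0$ strictly, and hence
$|d_j(t)|<\sqrt{cn}$ for every coordinate $j$, which yields
$V = O(\sqrt n)$ with the claimed probability.

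The core technical step is a one-round drift estimate (this is Lemma \ref{l:main}
below, which turns the class-based heuristic of the rough analysis into a precise
inequality). The statement I aim to establish is that whenever $\Phi(t-1)$ lies
in the working range $[1,2]$, Carole's greedy sign choice produces
\[
  \ex{\Phi(t) - \Phi(t-1) \mid \mathcal{F}_{t-1}}\;\le\;-\delta,
\]
for some constant $\delta = \delta(c,p)>0$ which can be made large by enlarging
$c$. I would prove this by partitioning the coordinates into $O(\log n)$ dyadic
classes indexed by $u$, where class $u$ contains the $j$ with
$\sqrt{cn}-|d_j(t-1)|\in [2^{-u-1},2^{-u}]\cdot\sqrt{cn}$; by pigeonhole some
class $u^\star$ carries an $\Omega(1/\log n)$ fraction of $\Phi(t-1)$. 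Within
that class, the random signed sum $L$ driving the linear part of the change in
$\Phi$ satisfies $\E|L|\ge \Omega(\Phi(t-1)/\sqrt c)$ by a standard
anti-concentration estimate (Khintchine / Paley--Zygmund), while the total
quadratic term $Q$ across all classes is at most $O(\Phi(t-1)/c)$ by the Taylor
expansion~(\ref{a}). Carole's greedy choice replaces $L$ by $-|L|$, so for
$p\ge 4$ and $c$ large enough the linear gain dominates and yields the claimed
drift.

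Given the drift lemma, the second step is a supermartingale concentration
argument. Let $\tau := \min\{t : \Phi(t) > 2\}$ and consider the stopped process
$\widetilde\Phi(t) := \Phi(t\wedge\tau)$. The constraint $\Phi\le 2$ on
$[0,\tau]$ lower-bounds each gap by $\Omega(n^{(p-1)/p})$; propagating this into
the one-round change formulas shows that the increment of $\widetilde\Phi$ has
conditional variance of order $n^{-\gamma}$ and a pointwise bound of order
$n^{-\gamma/2}$ --- and it is precisely this bookkeeping that forces the exponent
$\gamma = 1-2/p$. Combining the negative drift $-\delta$ with these bounds in
Azuma's inequality (or Freedman's, for a slicker argument) for supermartingales
yields
\[
  \pr{\tau \le n}\;\le\;\exp\bigl(-\Omega(n^\gamma)\bigr),
\]
completing the proof of the theorem.

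The hardest part, by a wide margin, is the drift lemma. The heuristic analysis
treats an idealized scenario in which essentially all of the potential sits in a
single class, whereas in reality it is spread over $O(\log n)$ classes. One must
verify both that the beneficial linear gain from the dominant class is not
swamped by the cumulative positive quadratic contributions from all the others,
and that the linear contributions from different classes (all driven by the
\emph{same} random vector $v_t$) do not cancel the gain. The choice $p\ge 4$ is
exactly what buys the slack needed to absorb these corrections, and it is
directly responsible for the exponent $\gamma = 1-2/p$ in the final probability.
Once the drift lemma is in hand, the concentration step is routine.
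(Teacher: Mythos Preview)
Your overall architecture --- a one-round drift lemma in a working range for $\Phi$, followed by a supermartingale/hitting-time argument --- is exactly the paper's. The gap is quantitative, and it sits precisely in the drift lemma.

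Your claim $\E|L|\ge \Omega(\Phi/\sqrt c)$ with $\Phi\in[1,2]$, i.e.\ a \emph{constant} lower bound on $\E|L|$, is false in general. It holds in the ``spread'' scenario where all coordinates sit at the same distance from the boundary (this is the rough-analysis picture), but it fails when the potential is concentrated in a few coordinates near the barrier. Concretely, if $\Phi\le 2$ then every gap satisfies $g_j\ge \Omega(n^{1-1/p})$, and one checks that the coefficient vector $(a_j)$ in $L=\sum_j a_j y_j$ has $\ell_1$-norm $O(n^{-\gamma/2})$ and $\ell_2$-norm $O(n^{-\gamma/2})$; hence $|L|=O(n^{-\gamma/2})$ pointwise and $\E|L|=O(n^{-\gamma/2})$, not $\Omega(1)$. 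This is already inconsistent with your own later statement that the increments of $\widetilde\Phi$ are $O(n^{-\gamma/2})$ pointwise: a process with increments $o(1)$ cannot have drift $-\delta$ for a fixed constant $\delta$.

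What the paper actually proves (Lemma~\ref{l:main}) is the correct scale: $Q\le Q_{\max}=O(n^{-\gamma})$ always, and $|L|\ge 10\,Q_{\max}$ with probability $\ge 1/4$ when $\Phi\in[H/2,H]$. So the drift is $-\Theta(n^{-\gamma})$, not $-\Theta(1)$. The exponent $\gamma$ in the final probability then comes from the elementary biased-walk estimate: positive increments are at most $Q_{\max}=O(n^{-\gamma})$, the interval $[H/2,H]$ has width $\Omega(1)$, so crossing it against a constant-factor negative bias takes $\Omega(n^{\gamma})$ steps and has probability $\exp(-\Omega(n^{\gamma}))$. Your attribution of $\gamma$ to the variance bookkeeping in Azuma/Freedman is not the mechanism; with the (false) constant drift, Azuma would give a much stronger bound, and with the true $n^{-\gamma}$ drift your Freedman setup does not obviously yield $\exp(-\Omega(n^{\gamma}))$ without reworking it into the biased-walk form.

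A secondary issue: the pigeonhole step ``some class carries $\Omega(1/\log n)$ of $\Phi$'' is too crude to conclude $|L|\gg Q$. The paper does not pigeonhole; it picks $k^\star=\argmax_k \ell_k\beta^{3k}$ (Lemma~\ref{l:some-class}), which simultaneously controls the class's contribution to $|L|$ from below and the geometric tail $\sum_k \ell_k\beta^{2k}$ governing $Q$ from above. This is what lets $p\ge 4$ and $c$ large close the argument without a logarithmic loss; a bare pigeonhole class need not have this property.
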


The potential starts initially at $1$. Let $H = 4e^3$.  We consider the situation when the potential $\Phi$ lies between
$\frac{H}{2}$ and $H$.  (The value $H$ could be any sufficiently large constant.)
We will show that if $\Phi(t-1) \leq H$, then at any step $t$ the potential can increase by at most $n^{-1+(2/p)}$.
More importantly, whenever $\Phi(t-1) \in [H/2,H]$, the sign $x_t$ for the vector $v_t$ at time $t$ can 
be chosen so that there is a strong negative drift that more than offsets the increase. More formally, we can decompose the rise in potential into 
a linear part $L(t)x_t$ and some quadratic part $Q(t)$,  satisfying the following properties.
 
\begin{lemma}
\label{l:main} Consider time $t$.
The increase in potential is a random variable (depending on the randomness in column $t$) that can be written as $ \Phi(t) - \Phi(t-1) \leq L(t)x_t + Q(t)$, where
\begin{enumerate}
    \item $Q(t) \leq Q_{\max}:= O(n^{-1+(2/p)})$  with probability $1$, whenever $\Phi(t-1)  \leq H$. 
    \item  $ |L(t)| \geq 20 Q_{\max}$ with probability at least  $1/4$, whenever $\Phi(t-1) \in [H/2,H]$.
\end{enumerate}
\end{lemma}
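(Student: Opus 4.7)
My plan is to explicitly Taylor-expand $\Phi(t)-\Phi(t-1)$ and then verify the two properties separately.

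Writing $d_j(t)=d_j(t-1)+x_t v_{t,j}$ and using $(x_t v_{t,j})^2=1$, we have
\[
\Phi_j(t)-\Phi_j(t-1)=c^p n^{p-1} g_j^{-p}\bigl((1-\alpha_j)^{-p}-1\bigr),\qquad \alpha_j=\frac{2 d_j x_t v_{t,j}+1}{g_j}.
\]
Expanding $(1-\alpha_j)^{-p}=\sum_{k\geq 1}\binom{p+k-1}{k}\alpha_j^k$ and separating the terms odd in $v_{t,j}$ (which are linear in $x_t$) from the even terms, I obtain $\Phi(t)-\Phi(t-1)\leq L(t)\,x_t+Q(t)$, where $L(t)=\sum_j A_j v_{t,j}$ with $A_j\approx 2p c^p n^{p-1} d_j g_j^{-p-1}$ and $Q(t)$ collects the even-in-$v_{t,j}$ parts together with the higher-order tail.

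For Property 1, the hypothesis $\Phi(t-1)\leq H$ forces $g_j\geq H^{-1/p} c n^{1-1/p}$ uniformly in $j$, so $\alpha_j$ is small and the series tail is geometrically dominated. The leading pieces of $Q(t)$ are $p c^p n^{p-1}\sum_j g_j^{-p-1}$ and $2p(p+1) c^p n^{p-1}\sum_j d_j^2 g_j^{-p-2}$; using $d_j^2\leq cn$, the uniform bound $g_j^{-1}\leq H^{1/p}c^{-1}n^{-1+1/p}$, and the global constraint $\sum_j g_j^{-p}\leq H c^{-p} n^{1-p}$, each of these evaluates to $O(n^{-1+2/p})$, yielding $Q_{\max}=O(n^{-1+2/p})$.

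For Property 2, note that $L(t)$ is a Rademacher sum, so $\Var\,L(t)=\sum_j A_j^2$. By Paley--Zygmund applied to this sum (using the fourth-moment bound $\E[L(t)^4]\leq 3\,\Var(L(t))^2$, proved by the same expansion as Property 1), one gets $\Pr\bigl[|L(t)|\geq \tfrac12\sqrt{\Var L(t)}\bigr]\geq 1/4$, so it suffices to show $\Var L(t)\geq 400\,Q_{\max}^2$ whenever $\Phi(t-1)\in[H/2,H]$. Following the rough-analysis sketch, I would bucket the coordinates into $O(\log n)$ dyadic classes based on the size of $\sqrt{cn}-|d_j|$ and use pigeonhole on $\Phi\geq H/2$ to extract a class $u^\star$ carrying potential $\phi_{u^\star}=\Omega(1/\log n)$. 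Within that class the $g_j$'s and $|d_j|$'s are comparable, so the class-wise contribution to $\Var L(t)$ is explicit in $\phi_{u^\star}$ and $u^\star$, and comparing the scaling of the linear amplitude ($\sim 2^{(2-p)u/2}$) against that of $Q$ ($\sim 2^{(2-p)u}$) gives, for $p=4$, a multiplicative advantage $\sim 2^u$. For $u^\star=0$ the resulting constant-factor gap is then converted into the factor $10$ by choosing $c$ sufficiently large.

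The main obstacle is tracking the constants carefully in Property~2, so that the class-wise variance lower bound beats both the factor of $10$ and the $\Omega(1/\log n)$ loss from pigeonhole, uniformly over configurations and over the dominant class $u^\star$. The parameter $c$ is available to absorb such constants, which is exactly how the rough analysis converts the $p=4$ scaling slack into an absolute constant. A secondary technicality is justifying Paley--Zygmund cleanly when the amplitudes $A_j$ are very non-uniform, but the required fourth-moment estimate follows from the same Taylor-expansion argument used for Property~1.
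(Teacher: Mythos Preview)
Your Taylor expansion and the treatment of Property~1 are essentially the paper's approach, and your choice of Paley--Zygmund for the anti-concentration of $L(t)$ is a perfectly good substitute for the tool the paper actually uses (Erd\H{o}s's small-ball lemma on signed sums, applied within a single weight class with the remaining signs conditioned out). So the overall skeleton is right.

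The genuine gap is in your class-selection step for Property~2. You propose to pigeonhole the potential over $O(\log n)$ dyadic classes and work with a class carrying $\phi_{u^\star}=\Omega(1/\log n)$. But the constant $c$ is fixed independently of $n$, so it \emph{cannot} absorb a $\log n$ loss: if the pigeonholed class happens to be a low one (small $u^\star$), your $2^{u^\star}$ scaling advantage is $O(1)$ and you are left trying to beat a constant by $\Omega(1/\log n)$, which fails for large $n$. Concretely, the target $\Var L(t)\geq 400\,Q_{\max}^2$ with $Q_{\max}=\Theta(n^{-1+2/p})$ is simply not implied by ``some class carries $\Omega(1/\log n)$ of $\Phi$''.

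The paper gets around this not by a better constant but by a smarter choice of class. Writing $\ell_k = n_k\beta^{kp}/(nH)\le 1$, it takes $k^\star=\arg\max_{k\ge 1}\ell_k\beta^{3k}$ and sets $v=\ell_{k^\star}\beta^{3k^\star}$. Then the $Q$-side sum $\sum_{k\ge 1}\ell_k\beta^{2k}$ is bounded by $\sum_k v\beta^{-k}=O(v)$, while the class-$k^\star$ linear contribution is at least $(\ell_{k^\star}\beta^{(p+2)k^\star})^{1/2}\ge (\ell_{k^\star}\beta^{6k^\star})^{1/2}=v/\sqrt{\ell_{k^\star}}\ge v$ (using $p\ge 4$ and $\ell_{k^\star}\le 1$). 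This comparison has \emph{no} $\log n$ loss; the summability of $\beta^{-k}$ replaces your pigeonhole. A separate short argument (the paper's Lemma on class~0) shows the class-$0$ contribution to $Q$ is at most $Q/2$ when $\Phi\ge H/2$, so restricting to $k\ge 1$ is harmless. If you plug this argmax selection into your Paley--Zygmund framework (lower-bounding $\Var L(t)$ by the single-class contribution $\sum_{j\in k^\star}A_j^2$), your proof goes through; with naive pigeonhole it does not.
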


Lemma \ref{l:main} will directly imply Theorem \ref{thm1}.
Note that the algorithm and the random arrival process defines a Markov chain on the state space on integer-valued vectors. Moreover, the potential $\Phi$ defines a Lyapunov function that maps each state to some real number. 
For our purposes, it suffices to consider the following simplified version  of a much more general result due to Hajek \cite{Hajek82} on hitting probability for Markov processes with a suitable Lyapunov function.
\begin{theorem}
\label{thm:lyap}
Let $\Psi$ be a Lyapunov function for a Markov chain defined on a countable state space. For an interval $[a,b]$, suppose the following holds: (i) the positive increments satisfy
$\Psi(Y_{k+1})  -  \Psi(Y_k) \leq \delta$ whenever $\Psi(Y_k) \leq b$  and
(ii) $\Pr[ \Psi(Y_{k+1}) - \Psi(Y_k)  \leq -20 \delta ] \geq 1/10$,
whenever $\Psi(Y_k) \in [a,b]$. 
Then for any time $t$,  
\[ \Pr\big[ \Psi(Y_t) \geq  b \,|\, \Psi (Y_{0}) \leq a \text{ and }  \Psi(Y_1),\ldots,\Psi(Y_{t-1}) < b  \big] \leq \exp\big(-\Omega(b-a)/\delta\big).\]
\end{theorem}
By the two properties of Lemma \ref{l:main},
and noting that the interval $[H/2,H]$ has size $\Omega(1)$, and the positive increment is bounded by $\delta = Q_{\max} = O(n^{-\gamma})$, Theorem \ref{thm1} follows directly by applying Theorem \ref{thm:lyap} with $\Psi = \Phi$ and $a=H/2$, $b=H$. 

\paragraph{Proving Lemma \ref{l:main}.}
In the rest of the section, we prove Lemma \ref{l:main}.
We begin by computing the relevant quantities.
At time step $t$, for $t=1,2,\ldots,T$,
let $x_t \in \{-1,1\}$ denote the sign chosen for $v_t$. For $j \in [n]$, let $v_t(j)$ denote the $j$-th coordinate of $v_t$, and $d_j(t)$ the discrepancy for the $j$-th coordinate at the end of step $t$. We initialize $d_j(0)=0$ for all $j$. Then,
\beq  \Delta d_j(t) := d_j(t)-d_j(t-1) = x_t v_t(j) \eeq
and note that $|\Delta d_j(t)| \leq 1$.

Throughout we will condition on the event that $\Phi(t-1) \leq H$. This will give us a useful separation, that the discrepancy $d_j(t-1)$, for any $j$, is not too close to $(cn)^{1/2}$. 
Indeed, if $\Phi(t-1) \leq H$, then $\Phi_j(t-1) \leq H$ for each $j \in [n]$. By 
\eqref{eq:phij}, this implies $g_j(t-1) = \Omega(n^{1-(1/p)})$.
By \eqref{eq:gj}, 
\[d_j(t-1) = (cn - g_j(t-1))^{1/2} 
\leq (cn)^{1/2} \Big( 1- \frac{g_j(t-1)}{cn}\Big)^{1/2}\] which implies that 
$d_j(t-1) \leq (cn)^{1/2} - \Omega(n^{1/2-1/p}) =(cn)^{1/2} - \omega(1)$, using that $p>2$.

We now upper bound the increase in potential, $\Phi(t) - \Phi(t-1)$.
Let us consider the function $f(x)  = (cn-x^2)^{-p}$ with domain $|x| < (cn)^{1/2}$.
Then  $f'(x) = 2px (cn-x^2)^{-p-1} $,
and
\begin{eqnarray}
\label{eq:2der}
f''(x) & =  & 2p (cn-x^2)^{-p-1} + 4p(p+1) x^2 (cn-x^2)^{-p-2}  \nonumber  \\
& = & \left(2p(cn-x^2) + 4p(p+1)x^2\right)     (cn-x^2)^{-p-2} \nonumber  \\
& \leq &  4p(p+1) cn (cn-x^2)^{-p-2} \qquad \text{(as $x^2 < cn$).}
\end{eqnarray}
For any smooth function $f$, recall that
\[ f(x+\eta)  - f(x) \leq  f'(x) \eta + \frac{1}{2} \max_{z \in [x,x+\eta]} f''(z) \eta^2.\] 
If $x$ satisfies $cn-x^2 =  \omega(1)$, it is easily checked that $f''(z) \leq 2 f''(x)$ whenever $z \in [x-1,x+1]$. Using the expression for $f'(x)$ and the bound on $f''(x)$ in \eqref{eq:2der}, we have that for $|\eta| \leq 1$ and $x$ satisfying $cn-x^2 =  \omega(1)$,
\beq  
f(x+\eta) - f(x) \leq 2p  \frac{x}{ (cn-x^2)^{p+1}} \eta +  4p(p+1) \frac{cn}{(cn-x^2)^{p+2}} 
\eeq
Setting  $x=d_j(t-1)$ and  $\eta = d_j(t) -d_j(t-1) = x_t v_t(j)$ gives $f=(g_j(t-1))^{-p}$ and
\beq  \Phi_j(t) - \Phi_j(t-1) \leq  L_j(t) x_t + Q_j(t) \eeq
where
\beq   L_j(t) := c^p n^{p-1} 2p  \frac{d_j(t-1) v_t(j)}{ g_j(t-1)^{p+1}} \quad \text{ and } \quad   Q_j(t) :=c^p
n^{p-1}4p(p+1) \frac{cn}{(g_j(t-1))^{p+2}} \eeq
As we will only be interested in time $t$, henceforth we drop $t$ for notational convenience. In particular, we denote $d_j = d_j(t-1)$, $v_j=v_t(j)$, $L_j =L_j(t)$ and $Q_j=Q_t(j)$. Let $L=\sum_j L_j$ and $Q=\sum_j Q_j$. 

Summarizing, if $\Phi(t-1) \leq H$, then we have that $\Phi(t) - \Phi(t-1) \leq L + Q$, where
\begin{equation}
    \label{eq:lq}
    L = \sum_j c^p n^{p-1} 2p  \frac{d_j v_j}{ g_j^{p+1}} \quad \text { and } 
\quad  Q = \sum_j c^p n^{p-1}4p(p+1) \frac{cn}{g_j^{p+2}}.
\end{equation}
We now focus on proving bound on $L$ and $Q$ in Lemma \ref{l:main}.
\paragraph{
Notation.} 
Let $\beta = 1+1/p$.
For $k=0,1,2,\ldots$ we say that coordinate lies in class $k$ if \[d_j^2 \in  [cn(1-\beta^{-k}),cn(1-\beta^{-k-1})),\] or equivalently $g_j \in (cn\beta^{-k-1},cn\beta^{-k}]$.

Let $n_k$ denote the number of coordinates in class $k$. 
As $g_j \geq cn \beta^{-k-1}$ for $j$ in class $k$, we have $g_j^{-(p+2)} \leq \beta^{(k+1)(p+2)}(cn)^{-(p+2)}$, and hence by \eqref{eq:lq} $Q$ can be upper bounded as,
\begin{equation}
\label{eq:q}
 Q \leq  \frac{4p(p+1)}{cn^2} \sum_{k \geq 0} \beta^{(k+1)(p+2)} n_k.
\end{equation}
We also have the following useful bounds.
\begin{lemma}
\label{lem:upper-bounds}
If $\Phi \leq H$, then
\begin{enumerate}
    \item For each class $k\geq 0$, $n_k \leq \min(n, n \beta^{-kp} H)$.
    \item $Q  = O(n^{-1+2/p})$.
\end{enumerate}
\end{lemma}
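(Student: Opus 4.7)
For part 1, the bound $n_k \leq n$ is immediate from the definition. For the bound $n_k \leq n\beta^{-kp}H$, the plan is to observe that any set $j$ in class $k$ satisfies $g_j \leq cn\beta^{-k}$ by definition of the class, so its contribution to the potential satisfies
\[
\Phi_j = c^p n^{p-1} g_j^{-p} \geq c^p n^{p-1} (cn\beta^{-k})^{-p} = n^{-1}\beta^{kp}.
\]
Summing this lower bound over the $n_k$ class-$k$ sets gives $n_k \cdot n^{-1}\beta^{kp} \leq \Phi \leq H$, and rearranging yields the claimed bound.

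For part 2, I substitute the bound $n_k \leq \min(n,\, n\beta^{-kp}H)$ from part 1 into~\eqref{eq:q} and split the sum at the cross-over index $k^\star$ with $\beta^{k^\star p} \approx H$, where the two bounds coincide. For $k \leq k^\star$ I use $n_k \leq n$, producing a term $\beta^{(k+1)(p+2)}\,n$. This is a geometric series in $\beta^{p+2}>1$ dominated by its largest term, giving a contribution of order $n\beta^{p+2} H^{(p+2)/p} = O(n)$ with $H,p,\beta$ treated as constants. For $k > k^\star$ I use $n_k \leq n\beta^{-kp}H$, which simplifies the summand to $nH\beta^{p+2}\,\beta^{2k}$ since $\beta^{(k+1)(p+2)}\beta^{-kp} = \beta^{p+2}\beta^{2k}$.

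The main obstacle is that $\sum_k \beta^{2k}$ diverges, so a polynomial bound requires truncation. The key observation is that part 1 already implicitly caps the range of non-empty classes: if $n_k \geq 1$ then necessarily $n\beta^{-kp}H \geq 1$, i.e., $\beta^{kp} \leq nH$, and hence $\beta^{2k} \leq (nH)^{2/p}$ for every non-empty class. The truncated geometric series in $\beta^{2k}$ is therefore dominated by its largest term, giving a total contribution of order $nH\beta^{p+2}(nH)^{2/p} = O(n^{1+2/p})$. Adding the two ranges and substituting into the prefactor $4p(p+1)/(cn^2)$ of~\eqref{eq:q} yields $Q = O(n^{-1+2/p})$, as required.
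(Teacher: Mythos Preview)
Your proof is correct and follows essentially the same approach as the paper. The paper's version of part~2 is slightly more streamlined: it uses only the bound $n_k \le n\beta^{-kp}H$ for all $k$ (without splitting at $k^\star$), since the geometric sum $\sum_{k=0}^{k_{\max}}\beta^{2k}$ is in any case dominated by its last term at $k_{\max}$, where your same observation $\beta^{k_{\max}} \le (nH)^{1/p}$ applies; your extra split into two ranges is harmless but unnecessary.
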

\begin{proof} As $\Phi = \sum_j c^p n^{p-1} g_j^{-p}$ and $g_j \leq cn \beta^{-k}$ for each $j$ in  class $k$, we have that
\beq \Phi  \geq c^p n^{p-1} \sum_{k\geq 0} \frac{\beta^{kp} n_k}{(cn)^p}  = \sum_{k \geq 0} \beta^{kp}
\frac{n_k}{n}.  \eeq
As $\Phi \leq H$, each class $k$ contributes at most $H$, which gives $n_k \leq \beta^{-kp} nH$.

We now bound $Q$. 
Let $k_{\max}$ be the maximum class index for which $n_k \geq 1$. As $1 \leq  n_{k_{\max}} \leq nH \beta^{-pk_{\max}} $, we have $\beta^{k_{\max}} \leq (nH)^{1/p} = O(n^{1/p})$.

Plugging $n_k \leq n\beta^{-pk} H$  in the bound for  $Q$ in \eqref{eq:q} gives
\beq Q \leq \sum_{k = 0}^{k_{\max}} \frac{4p(p+1)H}{cn}\beta^{p + 2k + 2}  
= O\left(\frac{\beta^{2k_{\max}}}{n}\right) = O(n^{-1+2/p}), \eeq
where we use that $c,p,H,\beta^{p+2} = O(1)$ and  $\sum_{i=0}^{k_{\max}} \beta^{2k} = O(p) \beta^{2k_{\max}}$.
\end{proof} 

 We now focus on lower bounding $|L|$, when $\Phi \geq H/2$.
Recall that $L = 2p c^p n^{p-1} \sum_j d_j g_j^{-p-1} v_j$, and hence is a weighted sum of $\pm 1$ random variables  $v_j$. We will call $a_j := 2p c^p n^{p-1} d_j g_j^{-p-1}$, the weight of $v_j$.
We will use the following fact from \cite{Erdos45}.
\begin{lemma} 
\label{lem:spread} Let $a_1,\ldots,a_m$
all have absolute value at least $1$.  Consider the $2^m$ signed
sums $\sum_{i=1}^m y_i a_i$ for $y_i \in \{-1,+1\}$. The number of sums that lie in any interval of length $2S$ is maximized
when all the $a_i=1$ and the interval is $[-S,+S]$.
In particular,
taking $S=d\sqrt{m}$ for a small constant $d$, the sums lie in $[-S,+S]$
only a small fraction of the time.  
\end{lemma}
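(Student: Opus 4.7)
The plan is to follow Erdős's classical approach via a chain-length bound combined with Sperner's theorem.

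First I would reduce to the case $a_i \geq 1$ for all $i$. Flipping $y_i \mapsto -y_i$ on the coordinates where $a_i < 0$ is a bijection on $\{-1,+1\}^m$ that preserves the multiset of signed sums, so we may take $a_i \geq 1$. Identify each sign vector $y$ with its positive-support set $T_y = \{i : y_i = +1\}$; then $\sum_i y_i a_i = 2\sum_{i \in T_y} a_i - \sum_i a_i$ is strictly monotone under inclusion. The crucial observation is that whenever $T \subsetneq T'$, the corresponding signed sums differ by $2\sum_{i \in T' \setminus T} a_i \geq 2|T' \setminus T| \geq 2$. Consequently, along any chain $T_1 \subsetneq T_2 \subsetneq \cdots \subsetneq T_r$ whose associated signed sums all lie in a common interval of length $2S$, we must have $2(r-1) \leq 2S$, i.e.\ $r \leq S+1$.

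Let $\mathcal{F}$ be the family of subsets $T_y$ whose signed sums fall into the fixed interval of length $2S$. By the chain bound and the dual of Mirsky's theorem, $\mathcal{F}$ partitions into $S+1$ antichains; Sperner's theorem bounds each by $\binom{m}{\lfloor m/2 \rfloor}$, giving $|\mathcal{F}| \leq (S+1)\binom{m}{\lfloor m/2 \rfloor}$. To match the extremal configuration claimed in the lemma exactly, I would refine this using a symmetric chain decomposition of $\{0,1\}^m$: every symmetric chain meets $\mathcal{F}$ in at most $S+1$ consecutive elements (by the chain bound applied to that chain), and when $a_i \equiv 1$ with interval $[-S,+S]$ every symmetric chain simultaneously saturates this bound (its middle $S+1$ elements all map into the interval), certifying the all-ones, centered-interval case as the worst. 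The second assertion (few sums in $[-d\sqrt{m}, d\sqrt{m}]$) is then a direct consequence of $\binom{m}{\lfloor m/2 \rfloor} = \Theta(2^m/\sqrt{m})$, since the bound $(S+1)\binom{m}{\lfloor m/2 \rfloor} \leq O(d) \cdot 2^m$ is a small fraction of the total $2^m$ sums for sufficiently small constant $d$.

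The main obstacle is the extremal matching step. The crude Sperner bound $(S+1)\binom{m}{\lfloor m/2\rfloor}$ is already sharp up to constants and in fact suffices for the "small fraction" application used in this paper, but the exact maximization in the $a_i \equiv 1$ case requires the SCD refinement above, which is the more delicate combinatorial packing argument. Everything else is a clean translation between $\pm$-sums and subsets plus standard central-binomial estimates.
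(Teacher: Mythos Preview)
The paper does not prove this lemma at all; it simply quotes it as ``the following fact from \cite{Erdos45}'' and moves on. So there is no in-paper proof to compare against.

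Your proposal is exactly Erd\H{o}s's original 1945 argument for the Littlewood--Offord problem: reduce to $a_i\ge 1$, pass to subsets, observe that sums increase by at least $2$ along any chain so that any chain inside $\mathcal{F}$ has length at most $\lfloor S\rfloor+1$, and then bound $|\mathcal{F}|$ via a symmetric chain decomposition (which sharpens the cruder Mirsky/Sperner count $(\lfloor S\rfloor+1)\binom{m}{\lfloor m/2\rfloor}$ to the exact maximum). This is correct and is precisely the content of the cited reference, so your write-up is in full agreement with what the paper invokes.

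One small comment on what you flag as the ``main obstacle'': the SCD step is not actually delicate here. In the extremal configuration $a_i\equiv 1$, interval $[-S,S]$, a symmetric chain running through sizes $j,j+1,\ldots,m-j$ contributes exactly $\min(|C|,\lfloor S\rfloor+1)$ elements to $\mathcal{F}$ --- long chains hit the full middle block, short chains lie entirely inside it --- so the SCD upper bound $\sum_C \min(|C|,\lfloor S\rfloor+1)$ is achieved with equality. That disposes of the exact-maximizer claim, and the ``small fraction'' consequence then follows immediately from $\binom{m}{\lfloor m/2\rfloor}=\Theta(2^m/\sqrt{m})$, as you say.
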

We use this as follows to show that the probability that $L \in [-S,S]$, for $S=Q$, is small. Consider the indices $j$ where the weights $a_j$ lies in (suitably chosen) weight class, and fix the signs outside that class. Then for any values of signs outside that class, the signs in the class that will put the total sum in $[-S,+S]$ is bounded by the probability in the lemma above.

We now do the computations.
\begin{claim}
\label{cl:bound-aj}
For a coordinate $j$ of class $k \geq 1$, the weight $|a_j|$ is at least 
$ p^{1/2} \beta^{k(p+1)}/(cn^3)^{1/2}$.
\end{claim}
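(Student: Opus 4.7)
The claim is a direct calculation from the definitions, once we unpack what ``class $k$'' means and use $k\ge 1$ to lower bound $|d_j|$. So my plan is to just carefully substitute the bounds from the class definition into the formula for $a_j$.

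First I would recall the ingredients. By definition
\[
 a_j \;=\; 2p\,c^p n^{p-1}\,\frac{d_j}{g_j^{p+1}},
\]
and a set $j$ in class $k$ satisfies
\[
 d_j^2 \;\in\; \bigl[cn(1-\beta^{-k}),\,cn(1-\beta^{-k-1})\bigr),\qquad
 g_j \;\in\; (cn\beta^{-k-1},\,cn\beta^{-k}].
\]
The upper bound $g_j\le cn\beta^{-k}$ immediately gives $g_j^{-(p+1)} \ge (cn)^{-(p+1)}\beta^{k(p+1)}$, which is where the $\beta^{k(p+1)}$ factor in the claim comes from.

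Next, I would use $k\ge 1$ together with $\beta=1+1/p$ to show that $|d_j|$ is of order $(cn)^{1/2}$. Specifically, $1-\beta^{-k}\ge 1-\beta^{-1}= 1/(p+1)$, so
\[
 d_j^2 \;\ge\; \frac{cn}{p+1},\qquad\text{hence}\qquad |d_j|\;\ge\;\frac{(cn)^{1/2}}{(p+1)^{1/2}}.
\]
This is the only place the hypothesis $k\ge 1$ is used; without it, $d_j$ could be essentially $0$.

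Finally, I would plug both bounds into the formula for $|a_j|$ and collect powers of $c$ and $n$:
\[
 |a_j| \;\ge\; 2p\,c^p n^{p-1}\cdot\frac{(cn)^{1/2}}{(p+1)^{1/2}}\cdot \frac{\beta^{k(p+1)}}{(cn)^{p+1}}
 \;=\;\frac{2p}{(p+1)^{1/2}}\cdot\frac{\beta^{k(p+1)}}{(cn^3)^{1/2}}.
\]
Since $2p/(p+1)^{1/2}\ge 2p^{1/2}$ for all $p\ge 1$ (indeed, squaring both sides reduces to $p\ge 1$), this establishes the stated lower bound. There is no real obstacle here; the only thing to be careful about is bookkeeping the exponents of $c$ and $n$ and remembering to invoke $k\ge 1$ precisely when lower bounding $|d_j|$.
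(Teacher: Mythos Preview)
Your approach is exactly the paper's: plug the class-$k$ bounds $g_j\le cn\beta^{-k}$ and $d_j^2\ge cn(1-\beta^{-1})$ into the definition of $a_j$ and simplify. The one genuine slip is in your final line. You assert that
\[
\frac{2p}{(p+1)^{1/2}}\ \ge\ 2p^{1/2},
\]
and say that squaring reduces this to $p\ge 1$. In fact squaring gives $p^2/(p+1)\ge p$, i.e.\ $p^2\ge p^2+p$, which is false for every $p\ge 1$; the inequality actually goes the other way (for $p=4$ one has $8/\sqrt 5\approx 3.58<4$). So the calculation as written does not reach the constant $2p^{1/2}$ stated in the claim.

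This is harmless for the paper's purposes: the constant in Claim~\ref{cl:bound-aj} is immediately absorbed into the $O(p^{3/2})$ on the right-hand side of Lemma~\ref{l:some-class}, so $2p/(p+1)^{1/2}$ works just as well as $2p^{1/2}$. (Indeed, the paper's own proof has the complementary slip, writing $1-\beta^{-1}\ge 1/p$ where it should be $1/(p+1)$.) But if you want your write-up to literally establish the stated bound, either relax the constant in the claim to $2p/(p+1)^{1/2}$, or note that the discrepancy is a factor $(p/(p+1))^{1/2}=\beta^{-1/2}$ which can be absorbed downstream.
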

\begin{proof}
This follows
as $a_j = 2p c^p n^{p-1} d_j g_j^{-p-1}$, and for any class $k\geq 1$, 
$d_j \geq  (cn(1-\beta^{-1}))^{1/2} = (cn/(p+1))^{1/2}$, which is at least $ (cn/p)^{1/2}/2$ as $p \geq 1$, and
$g_j^{-p-1} \geq (cn)^{-p-1} \beta^{k(p+1)}$.
\end{proof}

By Lemma \ref{lem:spread} and Claim \ref{cl:bound-aj}, to show that $L \gg Q$ with a constant probability, it would suffice to show that there is some class $k^* \geq 1$ such that 
\begin{equation}
     \label{eq:to-show}
      \frac{ p^{1/2} \beta^{k^*(p+1)}}{(cn^3)^{1/2}} n_{k^*}^{1/2} \gg Q  
\end{equation} 
Note that only classes $k\geq 1$ are considered in Claim \ref{cl:bound-aj}, 
while $Q$ also has terms from class $0$, so we need a final technical lemma to show that this contribution from class $0$ can be ignored.
\begin{lemma}
If $\Phi > H/2$, the contribution of class $0$ coordinates to $Q$ is at most $Q/2$.
\end{lemma}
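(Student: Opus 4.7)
The plan is to decompose $Q = Q^{(0)} + \sum_{k \geq 1} Q^{(k)}$ where $Q^{(k)}$ denotes the contribution from class $k$ sets, and similarly $\Phi^{(k)}$. Because $\Phi^{(0)}$ is bounded by an absolute constant (independent of $n$ and $H$), the hypothesis $\Phi > H/2$ forces $\sum_{k \geq 1} \Phi^{(k)}$ to be nearly as large as $H/2$. I will then convert this into a lower bound on $\sum_{k \geq 1} Q^{(k)}$ and compare it to a direct upper bound on $Q^{(0)}$, which finishes the claim provided $H$ is chosen sufficiently large (as it is, namely $H = 4e^3$).

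For class $0$ sets, $g_j > cn\beta^{-1}$. Substituting into the definitions of $\Phi_j$ and $Q_j$ and using $n_0 \leq n$ yields
\[
\Phi^{(0)} \leq \beta^p \quad\text{and}\quad Q^{(0)} \leq \frac{4p(p+1)\beta^{p+2}}{cn}.
\]
For class $k \geq 1$, I use the class-defining bounds in opposite directions: the upper bound $g_j \leq cn\beta^{-k}$ gives a lower bound on $Q^{(k)}$, namely $Q^{(k)} \geq 4p(p+1)\beta^{k(p+2)} n_k / (cn^2)$, while the lower bound $g_j > cn\beta^{-(k+1)}$ gives an upper bound $\Phi^{(k)} \leq \beta^{(k+1)p} n_k/n$.

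Eliminating $n_k$ between these two inequalities (exactly the same arithmetic used in Lemma \ref{lem:upper-bounds}, but run in reverse) produces, after cancelling the $\beta^{kp}$ factors,
\[
Q^{(k)} \geq \frac{4p(p+1)\beta^{2k - p}}{cn}\Phi^{(k)} \geq \frac{4p(p+1)\beta^{2-p}}{cn}\Phi^{(k)},
\]
where the last step uses $k \geq 1$ and $\beta > 1$. Summing over $k \geq 1$ and applying $\sum_{k \geq 1}\Phi^{(k)} = \Phi - \Phi^{(0)} \geq H/2 - \beta^p$ gives
\[
\sum_{k \geq 1} Q^{(k)} \geq \frac{4p(p+1)\beta^{2-p}}{cn}\bigl(H/2 - \beta^p\bigr).
\]

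The claim $Q^{(0)} \leq Q/2$ is equivalent to $Q^{(0)} \leq \sum_{k \geq 1} Q^{(k)}$, which in view of the bounds above reduces to the numerical inequality $H \geq 2(\beta^p + \beta^{2p})$. For $p = 4$ and $\beta = 5/4$, the right-hand side is less than $17$, well below $H = 4e^3$, so the inequality holds with large margin. The one piece of the argument that requires attention (rather than being routine) is the trade-off: the ratio $Q^{(k)}/\Phi^{(k)}$ degrades like $\beta^{2-p}$ at its worst (class $k=1$), and one must verify that the constant $H$ in the paper is chosen large enough to swamp both the $\beta^p$ correction from $\Phi^{(0)}$ and the $\beta^{2p}$ blow-up from converting $\Phi$-mass into $Q$-mass at low values of $k$. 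This is the only place where the value of $H$ is actually used.
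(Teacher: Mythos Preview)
Your proof is correct and follows essentially the same approach as the paper: bound $Q^{(0)}$ above by $4p(p+1)\beta^{p+2}/(cn)$ using $g_j > cn/\beta$ and $n_0 \le n$, then use the hypothesis $\Phi > H/2$ together with the class bounds on $g_j$ to produce a lower bound on $Q$ (or on $\sum_{k\ge 1}Q^{(k)}$) that beats $2Q^{(0)}$ once $H$ is large enough. The only cosmetic difference is that the paper lower-bounds the \emph{total} $Q$ directly (obtaining $Q \ge 2p(p+1)\beta^{-p}H/(cn)$ and then requiring $H \ge 4\beta^{2p+2}$), whereas you first subtract off $\Phi^{(0)}\le\beta^p$ and then convert $\Phi^{(k)}$-mass to $Q^{(k)}$-mass class by class, arriving at the slightly sharper requirement $H \ge 2(\beta^p+\beta^{2p})$; both are comfortably satisfied by $H=4e^3$.
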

\begin{proof}
As $g_j \geq cn/\beta$ for a class $0$ coordinate, and there are at most $n$ such coordinates,
the contribution of class $0$ to $Q$ is at most $4p(p+1) \beta^{p+2}/(cn)$.
So to prove the claim, it suffices to show that $Q> 8p(p+1) \beta^{p+2}/(cn)$.

As $g_j \geq cn \beta^{-k-1}$ for a coordinate of class $k$,
we have 
\beq  \frac{H}{2} \leq \Phi \leq c^p n^{p-1} \sum_{k \geq 0} \frac{n_k}
{(\beta^{-k-1} cn)^p} =  \frac{1}{n}\sum_{k\geq 0} \beta^{(k+1)p} n_k,\eeq
which gives $\sum_{k\geq 0} \beta^{kp }n_k \geq \beta^{-p} H n/2$.
Using this together with $g_j \leq cn \beta^{-k}$ for $j$ in class $k$ and $\beta^{k(p+2)} \geq \beta^{kp}$ in the expression for $Q$ in \eqref{eq:lq}, we get
\beq Q  \geq \sum_{k\geq 0} \frac{4 p(p+1)}{cn^2}  
n_k \beta^{k(p+2)}    \geq \frac{2p(p+1)\beta^{-p} H}{cn}
\geq \frac{8p(p+1)\beta^{p+2}}{cn},\eeq
where the last equality uses our choice of $H = 4e^3 \geq 4 \beta^{2p+2}$.
\end{proof}

By \eqref{eq:q} and the lemma above, to prove \eqref{eq:to-show} it suffices to show that 
\begin{lemma}
\label{l:some-class}
There is some class $k^*\geq 1$ such that 
\beq
 \beta^{(p+1)k^*} n_{k^*}^{1/2} \gg  O(p^{3/2}) \sum_{k\geq 1} \beta^{(k+1)(p+2)} \frac{n_k}{ (cn)^{1/2}}.
\eeq
\end{lemma}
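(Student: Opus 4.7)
The plan is to take $k^*$ to be the maximizer over $k\geq 1$ of the LHS squared, i.e.\ let $k^*$ achieve $M^* := \max_{k \geq 1} \beta^{2(p+1)k} n_k$. Then the LHS of the lemma is $(M^*)^{1/2}$, and after squaring (and pulling out the $\beta^{p+2}$ coming from $\beta^{(k+1)(p+2)} = \beta^{p+2}\beta^{k(p+2)}$), the target reduces to showing
\[
cn \cdot M^* \;\gg\; p^3 \beta^{2(p+2)} \Bigl(\sum_{k \geq 1} \beta^{k(p+2)} n_k\Bigr)^2.
\]

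The key step is a Cauchy--Schwarz decomposition of the inner sum. Writing $\beta^{k(p+2)} n_k = \sqrt{\beta^{kp} n_k}\cdot \sqrt{\beta^{k(p+4)} n_k}$ and applying Cauchy--Schwarz gives
\[
\Bigl(\sum_{k \geq 1} \beta^{k(p+2)} n_k\Bigr)^2 \;\leq\; \Bigl(\sum_{k \geq 1} \beta^{kp} n_k\Bigr) \cdot \Bigl(\sum_{k \geq 1} \beta^{k(p+4)} n_k\Bigr).
\]
The first factor is at most $nH$: since $g_j \leq cn\beta^{-k}$ on class $k$, we have $\Phi \geq n^{-1}\sum_{k\geq 0}\beta^{kp} n_k$, hence $\sum_k \beta^{kp} n_k \leq n\Phi \leq nH$ (this is essentially the estimate behind Lemma~\ref{lem:upper-bounds}). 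For the second factor, the maximality of $k^*$ forces $n_k \leq M^* \beta^{-2(p+1)k}$ for every $k\geq 1$, so $\beta^{k(p+4)} n_k \leq M^* \beta^{-k(p-2)}$, and since $p\geq 3$ (we take $p=4$), the geometric series $\sum_{k\geq 1} \beta^{-k(p-2)}$ is summable to an $O(1)$ constant, giving $\sum_{k\geq 1}\beta^{k(p+4)} n_k = O(M^*)$.

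Combining the two factors yields $\sum_{k\geq 1}\beta^{k(p+2)} n_k = O(\sqrt{nHM^*})$. Plugging this into the squared target reduces the lemma to $cn \cdot M^* \gg p^3 \beta^{2(p+2)}\cdot nH\cdot M^*$, i.e.\ $c \gg p^3 H \beta^{2(p+2)}$. Since $p$ and $H$ are constants and $\beta^{2(p+2)} \leq e^{2(p+2)/p} = O(1)$, this just requires $c$ to be a sufficiently large absolute constant, matching the paper's choice.

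\textbf{Main obstacle.} The subtle point is choosing the right split. The naive estimate $n_k \leq M^*\beta^{-2(p+1)k}$ alone telescopes to $\sum_k \beta^{k(p+2)} n_k \leq O(M^*)$, which would force the (generally false) condition $M^* \ll cn$ to close the argument. Cauchy--Schwarz lets us combine the \emph{global} potential bound $\Phi \leq H$ with the \emph{local} maximality of $k^*$, effectively trading one factor of $M^*$ for $\sqrt{nHM^*}$; this latter quantity is much smaller when $M^*$ is large, and is precisely the sharpening needed.
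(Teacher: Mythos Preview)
Your proof is correct, but it takes a genuinely different route from the paper. The paper rewrites everything in terms of $\ell_k := n_k\beta^{kp}/(nH)\le 1$, chooses $k^*=\argmax_{k\ge1}\ell_k\beta^{3k}$ (equivalently $\argmax n_k\beta^{k(p+3)}$), and bounds the right-hand side directly by a geometric sum $\sum_k \ell_k\beta^{2k}\le v\sum_k\beta^{-k}=O(pv)$ where $v=\ell_{k^*}\beta^{3k^*}$; the left-hand side is then at least $v$ because $p\ge4$ forces $\beta^{k^*(p+2)}\ge\beta^{6k^*}$ and $\ell_{k^*}\le1$. Your approach instead maximizes $n_k\beta^{2(p+1)k}$ and uses Cauchy--Schwarz to split $\bigl(\sum_k\beta^{k(p+2)}n_k\bigr)^2$ into a factor controlled globally by $\Phi\le H$ and a factor controlled locally by the maximality of $M^*$. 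Both arguments ultimately exploit the same two ingredients (the potential bound $n_k\le nH\beta^{-kp}$ and the freedom to pick $k^*$), but your Cauchy--Schwarz packaging is arguably cleaner: it works already for $p\ge3$ rather than $p\ge4$, and yields the requirement $c\gg Hp^3$ instead of the paper's $c\gg Hp^5$. The paper's argument, on the other hand, avoids Cauchy--Schwarz entirely and is a bit more direct once the right normalization $\ell_k$ is in hand.
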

\begin{proof}
Let $\ell_k = n_k \beta^{kp}/(nH)$, and note that by Lemma \ref{lem:upper-bounds}, $\ell_k \leq 1$ for all $k$. Writing $n_k$  in terms of $\ell_k$, we need to show that there is some $k^*$ satisfying
\beq
\label{eq:lk}
(\ell_{k^*} \beta^{k^*(p + 2)})^{1/2}  \gg O((Hp^3/c)^{1/2})
\sum_{k\geq 1}  \ell_k  \beta^{2k+p+2}. \eeq
Let $k^* = \text{argmax}_{k\geq 1} \ell_k \beta^{3k}$, and let $v=\ell_{k^*} \beta^{3k^*}$.
Then $\ell_k \beta^{3k} \leq v$ for all $k\geq 1$, and hence $\ell_k \beta^{2k} \leq v \beta^{-k}$. So the term  $\sum_{k \geq 1} \ell_k \beta^{2k+p+2}$  on 
the right hand side of \eqref{eq:lk} is at most \[\sum_{k\geq 1} \beta^{p+2} v \beta^{-k} \leq \frac{\beta^{p+2} v}{\beta-1} = O(p v).\]
Next, as $p \geq 4$, the left hand side of \eqref{eq:lk} is at least 
$(\ell_{k^*} \beta^{6k^*})^{1/2} = (v^2/\ell_{k^*})^{1/2} \geq v$, where the inequality follows
as $\ell_k \leq 1$ for all $k$.
So by \eqref{eq:lk}, choosing $c \gg Hp^5$ finishes the proof.
\end{proof}

\section{Arbitrary time horizon}
We now consider the $T$ round setting, where $T$ can be arbitrarily large compared to $n$. In particular, a uniformly chosen vector $v_t\in \{-1,+1\}^n$ arrives at time $t$, and Carole then selects
a sign $x_t\in \{-1,+1\}$.  As previously, 
$ P_t = \sum_{j=1}^t x_j v_j$, and the value $V=V(n,T)$ after $T$ rounds is $|P_T|_\infty$.

We will assume that $T$ is fixed in advance by Paul (and is  not known to Carole). In particular, if $T$ can be chosen adaptively by Paul depending on Carole's play, then the problem is not very interesting and 
the exponential in $n$ lower bound \cite{Barany79} for adversarial input vectors still holds.
This is because even if the input vectors are random, after sufficiently long time (about $\exp(\exp(n))$), some worst case adversarial sequence against any online strategy will eventually arrive, leading to worst case discrepancy $\Omega(2^n)$.

Our main result is a strategy for Carole, described in Section \ref{s:inf-time},  that achieves $V(n,T) = \Theta(\sqrt{n})$ with high probability. Before proving this result, we describe two strategies that  achieve a weaker (but still independent of $T$) bound of $O(\sqrt{n} \log n)$. These are very natural and interesting on their own with simple analysis and are discussed in Sections \ref{s:strat1} and \ref{s:strat2}.
\subsection{Strategy 1}
\label{s:strat1}
The first strategy is based on a potential function approach as before, but with an exponential penalty function.
This has the drawback of losing an extra $\log n$ factor, but has the advantage that the 
potential has a negative drift whenever it exceeds a certain threshold (without requiring an upper bound on $\Phi$ that we needed in Lemma \ref{l:main}). This allows us to bound the discrepancy for an arbitrary time horizon, as whenever the potential exceeds the thresholds the negative drift will bring it back quickly.
\paragraph{Strategy.} Consider a time step $t$. As before, let 
$d_i(t)$ be the discrepancy of the $i$-th coordinate at the end of time $t$.
Consider the potential
\[\Phi(t) = \sum_{i=1}^n \cosh(\lambda d_i(t)),\]
where $\lambda  = 1/(c n^{1/2})$ and $c$ is a large constant greater than $1$.
As before, when presented with the vector $v_t$, Carole chooses $x_t\in \{-1,+1\}$ that minimizes the increase in potential, $\Phi(t)-\Phi(t-1)$.
\paragraph{Analysis.}
Let $v_{t}(i)$ denote the $i$-th coordinate of $v_t$.
As we will only consider the time $t$, let us denote  $\Delta \Phi = \Phi(t) - \Phi(t-1)$, $\Phi = \Phi(t-1)$, $d_i = d_i(t-1)$ and $v_i = v_t(i)$.

By the Taylor expansion and as $\cosh'(x) = \sinh(x)$ and $\sinh'(x)=\cosh(x)$,
the increase in potential $\Delta \Phi$ can be written as
\begin{eqnarray}
 \Delta \Phi &  = &  \sum_i \left( \lambda  \sinh (\lambda d_i) x_t v_i  + 
\frac{\lambda^2}{2!}  \cosh (\lambda d_i) (x_t v_i)^2 + \frac{\lambda^3}{3!} \sinh (\lambda d_i) (x_t v_i)^3 + \ldots  \right) \nonumber \\
& \leq &   \sum_i \lambda \sinh (\lambda d_i) x_t v_i  +  \sum_i  \lambda^2 \cosh (\lambda d_i) (x_t v_i)^2   \label{eq:phi2}
\end{eqnarray}
where the second step follows as  $|\sinh(x)| \leq \cosh(x)$ for all $x \in \R$ and $|x_t v_i| =1$, $\lambda =o(1)$, and so the higher order terms are negligible compared to the second order term.

Let $L:=\lambda  \sum_i \sinh (\lambda d_i) v_i x_t$ be the linear term, and  $Q:= \lambda^2 \sum_i \cosh (\lambda d_i)$ be the second term in \eqref{eq:phi2} (note that $(x_tv_i)^2=1$). Conveniently, $Q$ is exactly $\lambda^2 \Phi$.

As the algorithm chooses $x_t$ to have  $\Delta \Phi \leq -|L| + Q$, 
it suffices to show the following key lemma.
\begin{lemma}
\label{lem:exp-l-q}
If $\Phi \geq 2n$, then $|L| \geq (c/2) Q$ with probability at least $1/4$. 
\end{lemma}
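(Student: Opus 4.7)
The plan is to reduce the claim to an anti-concentration statement for a Rademacher sum. By Carole's minimization rule for $x_t$, the increment satisfies $\Delta\Phi \leq -|L| + Q$, where $Q = \lambda^2 \Phi$ is deterministic given the state at time $t-1$, and $|L| = \lambda |X|$ with $X := \sum_i \sinh(\lambda d_i)\, v(i)$ a Rademacher sum with fixed coefficients $s_i = \sinh(\lambda d_i)$. Thus $|L| \geq (c/2) Q$ is equivalent to $|X| \geq (c/2)\lambda \Phi = \Phi/(2\sqrt{n})$, and I need this to hold with probability at least $1/4$ over the random $v_t$.

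First I would bound the second moment of $X$ from below in terms of $\Phi$. Using $\sinh^2 = \cosh^2 - 1$ and Cauchy--Schwarz applied to the vector $(\cosh(\lambda d_i))_{i=1}^n$,
\[\mathbb{E}[X^2] = \sum_i \sinh^2(\lambda d_i) = \sum_i \cosh^2(\lambda d_i) - n \;\geq\; \frac{1}{n}\Bigl(\sum_i \cosh(\lambda d_i)\Bigr)^{\!2} - n = \frac{\Phi^2}{n} - n.\]
The hypothesis $\Phi \geq 2n$ implies $n \leq \Phi^2/(4n)$, so $\mathbb{E}[X^2] \geq (3/4)\,\Phi^2/n$, and therefore $\sqrt{\mathbb{E}[X^2]} \geq (\sqrt{3}/2)\cdot \Phi/\sqrt{n}$, which already exceeds the target $\Phi/(2\sqrt{n})$ by a constant factor.

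Second I would invoke a Paley--Zygmund type anti-concentration for the Rademacher sum $X$. Khintchine's inequality gives the fourth-moment bound $\mathbb{E}[X^4] \leq 3(\mathbb{E}[X^2])^2$, and hence for any $\theta \in (0,1)$,
\[\Pr\!\left[X^2 \geq \theta\, \mathbb{E}[X^2]\right] \;\geq\; \frac{(1-\theta)^2 (\mathbb{E}[X^2])^2}{\mathbb{E}[X^4]} \;\geq\; \frac{(1-\theta)^2}{3}.\]
Combined with the previous step, on this event $|X| \geq (\sqrt{3\theta}/2)\cdot \Phi/\sqrt{n}$, so
\[\frac{|L|}{Q} \;=\; \frac{|X|}{\lambda \Phi} \;=\; c\sqrt{n}\,\frac{|X|}{\Phi} \;\geq\; c\,\frac{\sqrt{3\theta}}{2},\]
which meets the target $|L| \geq (c/2)Q$ as soon as $\theta \geq 1/3$.

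The main obstacle is the numerical tightness between the two constraints on $\theta$: choosing $\theta = 1/3$ makes the deterministic inequality hold with equality, but gives Paley--Zygmund probability only $4/27$, slightly short of the claimed $1/4$. I would close this gap in one of the standard ways: either by invoking a sharper anti-concentration for Rademacher sums (for instance, Szarek's sharp constant $1/\sqrt{2}$ in the $L^1$ Khintchine inequality combined with a reverse-Markov bound on $|X|$, or a Pinelis-type Gaussian-domination estimate), or by observing that the specific constant $c/2$ in the conclusion is soft — the drift argument in Strategy~1 only needs $|L|$ to exceed some fixed positive multiple of $cQ$, so any slack lost in the anti-concentration constant can be absorbed by taking the absolute constant $c$ sufficiently large at the outset.
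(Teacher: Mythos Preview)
Your approach is essentially identical to the paper's: both apply Paley--Zygmund to the Rademacher sum $X=\sum_i\sinh(\lambda d_i)\,v(i)$ and then lower-bound $\mathbb{E}[X^2]$ via $\sinh^2=\cosh^2-1$ together with Cauchy--Schwarz to reach a constant multiple of $\Phi^2/n$. The paper deals with the numerical point you raise by quoting Paley--Zygmund in the form $\Pr[X>s\,\E X]\ge(1-s^2)\,\E[X]^2/\E[X^2]$, which at $s=1/2$ yields exactly the claimed $1/4$; your proposed fixes (a sharper Rademacher anti-concentration constant, or absorbing the slack into the free parameter $c$) are equally valid ways to close the gap.
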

Before proving the lemma we need the following anti-concentration estimate, see e.g.,~\cite{veraar10}.
\begin{lemma}
\label{lem:1}
If $Y = \sum_i a_i Y_i$, with $Y_i$ independent and uniform in $\{-1,+1\}$,
and $a_i \in \R$, then for any $s \leq 1$,
\[\Pr[ |Y| \geq  s (\sum_i a_i^2)^{1/2}] \geq (1-s^2)^2/3.\]
In particular, setting $s=1/2$  
$\Pr\big[ |Y| \geq  (\sum_i a_i^2)^{1/2}/2 \big] \geq 3/16 \geq 1/10.$
\end{lemma}

\begin{proof}[Proof (Lemma \ref{lem:exp-l-q})]
By Lemma \ref{lem:1}, and using 
$\sinh^2 h = \cosh^2 x-1
$ for all $x$, with probability at least $1/10$,
\begin{equation}
\label{eq:lem1-eq1}  |L| \geq  \frac{\lambda}{2} \Big(\sum_i \sinh^2(\lambda d_i) \Big)^{1/2}  =  \frac{\lambda}{2} \Big(\sum_i \cosh^2 \lambda d_i - n\Big)^{1/2}. 
\end{equation}
As  $\cosh(x) \geq 1$ for all $x\in \R$, 
$\sum_i \cosh^2 (\lambda d_i) \geq \sum_i \cosh (\lambda d_i) = \Phi$.
So for $\Phi \geq 2n$, we get
\begin{equation}
\label{eq:lem1-eq2}
\left(\sum_i \cosh^2 (\lambda d_i)\right) - n \geq \frac{1}{2} \sum_i \cosh^2 (\lambda d_i) \underbrace{\geq}_{\text{Cauchy-Schwarz}} 
\frac{1}{2}  \frac{(\sum_i \cosh (\lambda d_i))^2}{n}  = \frac{1}{2} \frac{ \Phi^2}{n}
\end{equation}
Together \eqref{eq:lem1-eq2} and \eqref{eq:lem1-eq1} give that
\[\Pr \left[ |L| \geq \lambda/(2 \sqrt{2n}) \Phi \right] \geq 1/10.\]
Using $Q = \lambda^2 \Phi$ and plugging $\lambda = 1/(c\sqrt{n})$, gives that $\Pr[|L|> (c/2 \sqrt{2})Q ] \geq  1/10$.
\end{proof}

As $\Delta \Phi = -|L|+Q$, we have that the change in potential satisfies the following two properties: (i) $\Delta \Phi \leq Q = \lambda^2 \Phi$ and, (ii)
setting $c$ large enough, by Lemma \ref{lem:exp-l-q} gives that if $\Phi \geq 2n$, then $\Delta \Phi \leq -20Q$ with probability at least $1/10$.

Setting $\Psi = \log \Phi$, then this gives that
$\Delta \Psi \leq \log (1+\lambda^2) = (1+o(1)) \lambda^2$ as $\lambda = 1/c\sqrt{n}$.
Moreover,  whenever $\Psi \geq \log (2n)$, with probability at least $1/10$, $\Delta \Psi \leq  \log (1-20 \lambda^2) = - 20(1-o(1)) \lambda^2$ 

Applying Theorem \ref{thm:lyap}  to $\Psi$ with $a=\log 2n$ and $b=\infty$, we get that for any time $t$,
\[\Pr[\Psi(t) \geq \log(2n) + z ] \leq \exp(-\Omega(z/\lambda^2))  = \exp(-\Omega(nz)).\]
As $\Psi = \log \Phi \geq  \lambda |d_i|$  for each $i$, and $\lambda = 1/(cn^{1/2})$, setting $z=1$ gives that $V(n,T) = O(n^{1/2} \log n)$ with probability $1-n^{-\Omega(1)}$.

\subsection{Strategy 2}
\label{s:strat2}
Our second strategy is even simpler, and we call it the {\em majority rule}.
For convenience,  it is useful to think of the folded chip view of the game, as described in Section \ref{s:alt-formulations}.
In particular, there are $n$ chips, originally all at $0$, the position of the $i$-th
chip being the absolute value of $P_t(i)$.  From $0$, a chip must go to $1$. Each chip not at $0$ picks a random direction, and Carole then either moves all of the chips in their selected direction or all in their opposite directions. So from a position $y \neq 0$, a chip can go to $y\pm 1$.

\paragraph{Majority rule strategy.} 
Consider the directions $v_t(i)$ of the chips not at position zero. If there is a direction with strict majority, Carole chooses the sign $x_t$ that makes the majority of the chips not at zero move towards zero. Otherwise, in case of a tie, Carole picks $x_t$ randomly.

\paragraph{Analysis.}
We will show the following.
\begin{theorem} The majority rule strategy achieves $\E[V(n,T)] = O(\sqrt{n}\log n)$. 
More precisely, the probability that any chip $i$ has position  $\geq k \sqrt{n}$ at time $T$ is $ne^{-\Omega(k)}$.
\end{theorem}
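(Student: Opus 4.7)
The plan is to reduce the analysis of each individual chip, under the majority rule, to a one-dimensional reflected biased random walk on $\mathbb{Z}_{\geq 0}$ with bias $\Omega(1/\sqrt{n})$ toward zero, and then control the tail of that walk at any fixed time via an exponential martingale. Only a per-chip, one-step-ahead conditional bound is needed, so the fact that the chips' moves are coupled through Carole's single sign $x_t$ and that the number of non-zero chips fluctuates over time will not cause difficulty.

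First, I compute the per-chip one-step bias. Fix a chip $i$ and condition on the history up to time $t$. Suppose $Y_t^{(i)} = y > 0$ and let $m \leq n$ be the number of chips currently off zero (including $i$). The $m$ directions are i.i.d.\ uniform $\pm 1$, and under the majority rule chip $i$ moves toward zero iff Carole's sign matches chip $i$'s ``decrement'' preference. Conditioning on chip $i$'s own preference, the remaining $m-1$ preferences are uniform i.i.d.\ $\pm 1$; a direct calculation using Stirling's formula for $\binom{2k}{k}/2^{2k}$ shows that the probability that the majority (with uniform tie-breaking) agrees with chip $i$ equals $\tfrac{1}{2} + \Theta(1/\sqrt{m}) \geq \tfrac{1}{2} + \delta$ for $\delta := c_0/\sqrt{n}$, uniformly in $1 \leq m \leq n$.

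Using only this per-step bound, I couple $Y_t^{(i)}$ pathwise to a Markov chain $R_t$ on $\mathbb{Z}_{\geq 0}$ which jumps deterministically from $0$ to $1$ and, from any $y \geq 1$, moves to $y-1$ with probability exactly $\tfrac{1}{2} + \delta$ and to $y+1$ with probability $\tfrac{1}{2} - \delta$, so that $Y_t^{(i)} \leq R_t$ for all $t$. By Lindley's identity, $R_T \stackrel{d}{=} \max_{0 \leq u \leq T} S_u$, where $S_u = \xi_1 + \cdots + \xi_u$ has i.i.d.\ $\pm 1$ increments with $\E[\xi] = -2\delta$. The exponential martingale $\alpha^{S_u}$ with $\alpha := (1+2\delta)/(1-2\delta) > 1$, combined with Doob's maximal inequality, gives $\Pr[\max_{u \leq T} S_u \geq k] \leq \alpha^{-k} \leq e^{-2\delta k}$. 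Plugging $k = k'\sqrt{n}$ and $\delta = c_0/\sqrt{n}$ yields $\Pr[Y_T^{(i)} \geq k'\sqrt{n}] \leq e^{-\Omega(k')}$, and a union bound over the $n$ chips produces the stated tail $\Pr[V(n,T) \geq k'\sqrt{n}] \leq n e^{-\Omega(k')}$; integrating this tail over $k'$ gives $\E[V(n,T)] = O(\sqrt{n}\log n)$.

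The main obstacle is cleanly justifying the coupling above: although the instantaneous per-chip bias is always at least $\delta$, chip $i$'s trajectory is not a standalone Markov chain, since it depends on the other chips through both the random count $m$ and Carole's correlated sign $x_t$. The domination must therefore be argued step by step using only the conditional bound from the first computation, rather than by comparing stationary distributions or exploiting any joint structure. A smaller technical point is the behavior at $0$: chip $i$ does not participate in the vote and is kicked back to $1$ deterministically, which is precisely the reflection rule built into $R_t$.
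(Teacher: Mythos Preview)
Your argument matches the paper's in its two main steps: the per-chip one-step bias computation (the paper does the same even/odd case analysis on the other nonzero chips to extract drift $\Omega(n^{-1/2})$ toward $0$), and the stochastic-domination reduction to a single biased reflected walk on $\mathbb{Z}_{\ge 0}$. You are right to flag that this domination must be argued step by step from the conditional bound; the paper is somewhat informal at exactly this point. The only substantive difference is how the exponential tail of the dominating walk is extracted: the paper writes down the stationary law $u_y \propto \bigl(\tfrac{1-\eps/2}{1+\eps/2}\bigr)^y$ and reads off $\Pr[y\ge \alpha\eps^{-1}]=\Theta(e^{-\alpha})$, whereas you go through Lindley's identity and the exponential martingale $\alpha^{S_u}$ with Doob's maximal inequality. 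Your route is more explicit about finite $T$ (the paper's stationarity appeal tacitly uses that the chain starts at $0$ and is stochastically monotone so that the time-$T$ law is dominated by the stationary one), but the two are interchangeable.

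One small slip to fix: Lindley's identity $R_T\stackrel{d}{=}\max_{0\le u\le T}S_u$ holds for the recursion $R_{t+1}=\max(R_t+\xi_{t+1},0)$, which from state $0$ stays at $0$ with probability $\tfrac12+\delta$. Your $R_t$ instead jumps $0\to 1$ deterministically (to match the chip game), so Lindley does not apply to it as stated. This is harmless: the deterministic-reflection walk and the Lindley walk can be coupled to differ by at most $1$ at all times, so you still get $\Pr[R_T\ge k]\le \alpha^{-(k-1)}$ and the conclusion is unchanged.
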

\begin{proof}
Consider some time $t$, and a chip $i$ that is at a non-zero position at the end of $t-1$. We claim that chip $i$ basically does a random walk with drift towards zero.

Look at the other non-zero coordinates (other than $i$), and suppose there are $\ell$ of them. We consider two cases depending on whether $\ell$ is even or odd.
\begin{enumerate}
\item
$\ell$ is even. Consider the random directions of the $\ell$ chips other than $i$, as given by $v_t$.  
If these directions are evenly split, which occurs with probability $\eps\sim K\ell ^{-1/2} \geq Kn^{-1/2}$, then
the majority direction is determined by $v_t(i)$ and so chip $i$ goes towards the origin.

Else if the $\ell$ directions are not split evenly, then at least $\ell/2+1$ chips of these $\ell$ chips have one direction (and at most $\ell/2-1$
the other). So $v_t(i)$ has no effect on the outcome of the majority rule, and as $v_t(i)$ is random and independent of the other $\ell$ directions, chip $i$ moves randomly. 
\item
$\ell$ is odd. If strictly more than $(\ell+1)/2$ of the $\ell$ chips have one direction, then the sign of $i$ does not affect the majority outcome. So as above, the chip $i$ moves randomly. 

Else, exactly $(\ell+1)/2$ chips have one direction (say $+$) and $(\ell-1)/2$ have $(-)$. As the directions are random this happens with probability $\eps \geq K n^{-1/2}$. Conditioned on this event, with probability $1/2$, the direction of chip $i$ is also $+$, in which case there is a strict majority for $+$, and chip $i$ goes towards the origin. Else
$i$ picks the direction $-$ with probability $1/2$, resulting in an overall tie, in which case Carole (and hence chip $i$) moves randomly.
\end{enumerate}
So in either case, each chip does a random walk on non-negative integers with a reflection at $0$ and with drift at least $\varepsilon/2$ towards the origin.
That is, from $0$
it goes to $1$, and
from $y\neq 0$ it goes to $y-1$ with probability at least $\frac{1}{2}(1+\frac{\eps}{2})$, and else
to $y+1$. So the stationary distribution at positions $y>0$ for this chip, is dominated by the stationary distribution for an (imaginary) chip that goes to $y-1$ with probability $\frac{1}{2}(1+\frac{\eps}{2})$ and to $y+1$ otherwise.
This stationary distribution $u_y$ satisfies
\beq u_y = \frac{1-(\eps/2)}{2}u_{y+1} + \frac{1+(\eps/2)}{2}u_{y-1}.  \eeq
This has the solution
\beq u_y = K_\epsilon \left(\frac{1-(\eps/2)}{1+(\eps/2)}\right)^y  \eeq
and in particular,
\beq \Pr[y \geq \ah\eps^{-1}] = \Theta(e^{-\ah})  \eeq
Taking $\ah = (1+\del)\log n$, the probability of any particular
chip being at $\ah\eps^{-1}$ or higher is $o(n^{-1})$ so 
with probability $1-o(1)$ all the chips are $\leq \ah\eps^{-1}$.
So the value $V= V(n,T) = O(\ah\eps^{-1}) = O(\sqrt{n}\log n)$ with high probability.
\end{proof} 

\subsection{A strategy with $O(n^{1/2})$ bound}
\label{s:inf-time}
We now describe a strategy that achieves $V(n,T) = O(\sqrt{n})$ with high probability. It will be based on combining the ideas from the strategy for $V(n,n)$ from Section \ref{s:main} (call this Rule 1) and the majority rule from Section \ref{s:strat2} (call this Rule 2).

\paragraph{The strategy.}  
It is convenient to view the process as the chip game defined in Section \ref{s:alt-formulations}. Now, chips will also be colored either {\em green} or {\em red}.
Initially, all the chips begin at $0$ and are colored {\em green}.
Starting at $t=1$, we do the following.
\begin{enumerate}
    \item At (odd) time steps $t$, choose the sign $x_t$ by applying Rule 1 on the green chips.
    \item At (even) time steps $t$, choose $x_t$ by applying Rule 2 on all the chips (one could do even better by applying Rule 2 on the red chips, but it is not necessary). 
\end{enumerate}  
The color of the chips evolves as follows.
When the potential $\Phi$ (given by \eqref{Phi1}) for Rule 1 exceeds $H$, {\em all} the chips become red.
When a red chip reaches 0, it becomes green.

\paragraph{Analysis.}
We will show the following.
\begin{theorem} For any time $t$,
   the strategy above achieves $V(n,t) = O(n^{1/2})$ with probability exponentially close to $1$.
\end{theorem}
 \begin{proof}
 The result will follow from the following three simple observations, combined together with the properties of Rule 1 and Rule 2 that we proved earlier. 
 
 First, when Rule 1 is applied on the green chips, the red chips move randomly. This follows as for any red chip $i$, the coordinate $v_{t}(i)$ of $v_t$ is independent of the chosen sign $x_t$ (which only depends on $v_t(j)$ for coordinates $j$ with green chips, and the positions of these green chips).
 
 Second, if we apply a good strategy on a chip at alternate time steps, and choose the sign randomly at the other time steps then we still get a good strategy. In particular, for Rule 2 this halves the negative drift which makes no qualitative difference. For Rule 1, this halves the negative drift due to the $L$ term (while $Q$ does not change), but this can be increased by any constant factor by modifying the parameters.
 
 Third, when we calculate the potential $\Phi$ to apply Rule $1$ on the green chips, we will assume (for the purposes of calculation of $\Phi$ only) that the red chips are at position $0$, and they do not move (that is $v_t(i)=0$ for them) until they become green.
 Lemma \ref{l:main} and hence Theorem \ref{thm1} remain true in this setting, as $Q$ can only decrease if some $v_t(i)=0$, and 
 the bound for $|L|$ is not affected as we did not consider the contribution of class $0$ in Lemma \ref{l:some-class}.
 
 We now use these observations to finish the analysis.
 Let us divide the time into {\em phases}, where a new phase begins whenever the potential $\Phi$ for Rule 1 on green chips reaches $H$. Recall at this point, all the chips become red, 
 and each chip stays red until it reaches $0$. Note that a chip can only turn red when a phase begins and it must be at position $O(n^{1/2})$ when this happens (green chips are always at positions $O(n^{1/2})$ as $\Phi \leq H$).
 
 The key point is that as the red chips have an expected drift $cn^{-1/2}$  toward zero under Rule 2 (and move randomly otherwise), the probability that a particular chip stays red for $kn$ steps is $\exp(-\Omega(k))$. So, say, within $n^3$ time steps since a phase starts, all the chips will reach zero with probability exponentially close to $1$. 
By the third observation above and Theorem \ref{thm:lyap}, for any time $t'$, the probability that next phase begins in exactly $t'$ steps from the start of current phase is $\exp(-n^\gamma)$. Together, this gives that for any fixed $t$, the probability that there is any red chip present at $t$ will be exponentially close to $0$.
\end{proof}
\bibliographystyle{plain}
\bibliography{refr}

 \end{document}